\newtheorem{theorem}{\bf Theorem}
\newtheorem{lemma}{\bf Lemma}
\newlength{\aligntop}
\newlength{\alignbot}
\renewenvironment{align}{%
  \vspace{\aligntop}
  \start@align\@ne\st@rredfalse\m@ne
}
{ %
  \math@cr \black@\totwidth@
  \egroup
  \ifingather@
    \restorealignstate@
    \egroup
    \nonumber
    \ifnum0=`{\fi\iffalse}\fi
  \else
    $$%
  \fi
  \ignorespacesafterend
 \vspace{\alignbot}\par\noindent
}
\begin{document}
\title{\huge Contract-Theoretic Resource Allocation \\ for Critical Infrastructure Protection\vspace{-0.4cm}}

\author{\IEEEauthorblockN{ AbdelRahman Eldosouky$^1$, Walid Saad$^1$, Charles Kamhoua$^2$, and Kevin Kwiat$^2$} \IEEEauthorblockA{\small
$^1$ Electrical and Computer Engineering Department, Virginia Tech, Blacksburg, VA, USA, Emails:\{iv727,walids\}@vt.edu \\
$^2$Air Force Research Laboratory, Information Directorate, Cyber Assurance Branch, Rome, NY, USA, \\Emails:\{charles.kamhoua.1,kevin.kwiat\}@us.af.mil\\
\vspace{-1cm}
}%
 \thanks{Approved for Public Release; Distribution Unlimited: 88ABW-2015-1367, dated 24 Mar 2015 \newline
This research was supported by the U.S. National Science Foundation under Grant NSF CNS-1446621 as well as by AFOSR via SFFP.}  
 } 
\date{}
\maketitle

\begin{abstract}
Critical infrastructure protection (CIP) is envisioned to be one of the most challenging security problems in the coming decade. One key challenge in CIP is the ability to allocate resources, either personnel or cyber, to critical infrastructures with different vulnerability and criticality levels. In this work, a contract-theoretic approach is proposed to solve the problem of resource allocation in critical infrastructure with asymmetric information. A control center (CC) is used to design contracts and offer them to infrastructures' owners. A contract can be seen as an agreement between the CC and infrastructures using which the CC allocates resources and gets rewards in return. Contracts are designed in a way to maximize the CC's benefit and motivate each infrastructure to accept a contract and obtain proper resources for its protection. Infrastructures are defined by both vulnerability levels and criticality levels which are unknown to the CC. Therefore, each infrastructure can claim that it is the most vulnerable or critical to gain more resources. A novel mechanism is developed to handle such an asymmetric information while providing the optimal contract that motivates each infrastructure to reveal its actual type. The necessary and sufficient conditions for such resource allocation contracts under asymmetric information are derived. Simulation results show that the proposed contract-theoretic approach maximizes the CC's utility while ensuring that no infrastructure has an incentive to ask for another contract, despite the lack of exact information at the CC.\end{abstract}

\section{Introduction}
\vspace{-0.1cm}Critical infrastructure (CI) is a term used to describe cyber and networking systems that are considered essential to the functioning of our modern economies and societies. CIs can cut across a variety of domains and applications. For instance, in the United States, the Department of Homeland Security classifies CIs into sixteen sectors that include energy production, financial services, communications, nuclear reactors, transportation systems, water supply, and financial services~\cite{CIP01}. However, in general there is no global classification for CIs and each country determines its own critical categories independently.

Critical infrastructure protection (CIP) has recently attracted significant attention~\cite{CIP03,CIP02,CIP04,CIP05}, particularly following recent terrorist and malicious attacks that targeted CIs across various countries. One main challenge in CIP is the fact that governmental agencies have only a limited amount of resources, such as personnel or even cyber resources, that can be used for CIP. Under such resource constraints and given the complex nature of CIs, it is imperative to develop practical resource management mechanisms that can optimally allocate such resources, given the criticality levels and vulnerabilities of the various CIs. Such resource deployment strategies are particularly critical for protecting CIs that are based in foreign countries or remote sites. In such scenarios, government agencies who want to protect local and foreign CIs will often own a control center (CC) that is responsible for monitoring these CIs and distributing resources among them. One major challenge for resource deployment here, is the fact that the CIs are often owned by different entities that \emph{consider their own CIs to be the most critical}. Indeed, every CI owner will report to the CC that its own infrastructure is the most vulnerable and most critical. Determining real levels of vulnerability and criticality of each individual infrastructure is very challenging for the CC. However, intuitively, the CC should design a proper mechanism to allocate resources based on the vulnerability and criticality levels of each CI. For example, highly vulnerable CIs should get higher resources than less vulnerable ones. Similarly, highly critical CIs must be properly prioritized in the CIP process. However, as each CI will attempt to get as much resources as possible by claiming that it is the most vulnerable or critical, the CC may not be able to properly distribute its limited resources.

The problem of CIP has attracted recent attention in the literature such as in ~\cite{CIP02,CIP03,CIP04,CIP05}. The work in~\cite{CIP03} focused on CI control systems by presenting a system to secure their functions and management tasks. Similarly, the authors in~\cite{CIP02} dealt with CI control systems by exposing and analyzing the vulnerabilities of these systems. In~\cite{CIP04}, the authors proposed a risk-aware robotic sensor network and applied it to CIP. The work in~\cite{CIP05} studied the vulnerabilities and protection challenges that face CIs and proposed a collaborative game theoretic solution for CIP. Although these works proposed solutions to CIP problem, they did not address resource allocation problem, as studied here.

The problem of resource allocation for CIP has been studied in recent works such as ~\cite{CIP09} and ~\cite{CIP10}. The work in~\cite{CIP09} studied an optimal resource allocation scenario in which resources were allocated to CIs depending on the likelihood of these CIs to be attacked according to their valuation on possible attackers. The authors in~\cite{CIP10} studied the problem of allocating resources where resources were supposed to protect an area and the objective was to maximize the area protected by these resources. Despite their importance, these existing works did not address the problem of asymmetric information in resource allocation which was first studied in~\cite{CIP11}. The problem discussed in~\cite{CIP11} cannot be generalized to a large-scale CIP system as it addresses an isolated problem and it does not take into account the impact of information availability on resource allocation.

In contrast to these works, here, we propose a contract-theoretic model to allocate resources for CIP under asymmetric information. Contract theory is a powerful framework from microeconomics that provides a useful set of tools for modeling mechanisms under information asymmetry ~\cite{CT00}. The key idea is that the CC should offer right contracts to CIs so that they have the incentive to truthfully reveal their information. In our model, the CC is seen as the principal that offers contracts to agents which are the CIs. While contract theory has been studied in the context of wireless networks \cite{CIP06,CIP07}, such works do not apply directly to CIP and their results cannot be generalized for accounting for criticality and vulnerability levels of the CIs.

The main contribution of this paper is to propose a resource allocation mechanism for CIP that can optimally allocate resources between a number of CIs without knowing their exact types. The problem is formulated using a contract-theoretic model in which the CC calculates the amount of resources that will be given to each CI and offer contracts to CIs with these values and the rewards they should reciprocate. In particular, we proposed a novel approach to define a CI by accounting for two different CI types according to the vulnerability and criticality levels. Both types are used in the process of resource allocation and the criticality level is also used to prioritize CIs that will be protected. For the formulated problem, we analyzed the necessary and sufficient conditions for deriving the optimal contracts. We studied the optimal contract and we proved that the problem has an optimal solution for the case of two CIs. The model was also shown to motivate each CI to reveal its actual type and accept the contract designed for its type, therefore allowing resource allocation in the absence of exact information at the CC on the criticality and vulnerability levels of the CIs. Simulation results show that the proposed approach will yield a higher CC utility when compared with a baseline resource allocation algorithm.

The rest of the paper is organized as follows.
Section~\ref{sec:sysmodel} provides the system model and defines the vulnerability and criticality levels of the CIs. In Section~\ref{sec:theory}, the problem is formulated as a contract-theoretic mechanism and several properties are derived and analyzed. Simulation results are discussed in Section~\ref{sec:sim}. Finally, conclusions are drawn in Section~\ref{sec:conclusion}.

\section{System Model}\label{sec:sysmodel}
We consider a system in which one control center (CC), that can represent a government agency is interested in sending missions to secure $N$ CIs in a set $\mathcal{N}$ that can be owned by different entities (e.g., foreign agencies, different department of defense agencies, etc.). The missions are viewed as \emph{resources} owned by the CC and that must be allotted to different CIs. Such resources can be personnel or cyber resources. Each CI has some vulnerable points that need to be protected. As the number of vulnerable points of a CI increases, the amount of resources needed to protect it will also increase. Infrastructures are classified into groups according to their vulnerability levels. The vulnerability level can be represented by an integer number $w_i$ where there are $M$ different levels in the set $\mathcal{M}$ and $M \leq N$; thus yielding different $M$ vulnerability levels. Infrastructures are grouped by an increasing order of vulnerability levels:
\vspace{-0.2cm}
\begin{equation}\label{eq:1}
w_1 < \ldots < w_i<\ldots < w_M.
\end{equation}

\vspace{-0.2cm}A higher $w$ implies that the CI has a higher vulnerability level. The CC \emph{does not have exact information} on the individual $w_i$ of every CI $i$. Instead, the CC can know with which probability a certain CI can belong to a certain $w$ type . Therefore, we let $p_{i,w_j}$ be the probability with which CI $i$ belongs to a certain type $w_j$.

Each CI has also a criticality level that can be represented by a number $\theta_i$ where there are different $K$ levels in the set $\mathcal{K}$ and $K \leq N$. Thus there exists $K$ criticality levels to which various CIs can belong. The criticality level is determined by factors such as the service performed by this CI and its relation to other CIs. The CIs are grouped by an increasing order of criticality levels:
\vspace{-0.2cm}
\begin{equation}\label{eq:2}
\theta_1 < \ldots < \theta_i<\ldots < \theta_K.
\end{equation}

\vspace{-0.2cm}A higher $\theta$ implies that the CI has a higher criticality level and thus it is more critical for the CC to protect it. Similar to the vulnerability levels here, we assume that the CC \emph{does not have exact information} on the individual $\theta_i$ of CI $i$. Instead, the CC can only know with which probability a certain CI can belong to $\theta$ type . Therefore, we let $q_{i,\theta_j}$ be the probability with which CI $i$ belongs to a certain type $\theta_j$. The criticality level is used mainly to help the CC decide which CIs will be protected in case not enough resources are available for all CIs.

The values of $w$ and $\theta$ are selected in a way that makes the resource allocation depends primarily on the vulnerability level. The criticality level affects the resource allocation but without superseding the vulnerability level, i.e., the criticality level will help the CI to get more resources than a less critical infrastructure but not more than a highly vulnerable one. Therefore, the values of $\theta$ when combined with $w$; should satisfy the following property:
\vspace{-0.3cm}\begin{equation}
\theta_K \cdot w_i \leq \theta_1 \cdot w_{i+1} , \forall i=1,\dots, {M-1}.
\end{equation}

\vspace{-0.2cm}In this case, $\theta$ can be seen as a sub-type under $w$ type in the process of allocation, although they are really independent.

To address the resource allocation problem, we explore the analogy between allocating resources to CIs and forming contractual agreements between firms and employees. We propose to use \emph{contract theory} -- a powerful framework from microeconomics~\cite{CT00,CIP08}, that allows to analyze the process of creating contracts between firms and employees. Here, we note that, although some recent works \cite{CIP06,CIP07} have looked at contract theory for wireless communication; however, these works do not handle the challenges of CI resource allocation.

We cast the CIP problem as a \emph{contractual situation with asymmetric hidden information} between a firm, here being the CC, and a number of employees, here being the $N$ CIs (or their owners). The asymmetric hidden information property stems from the fact that the CC does not know the exact vulnerability and criticality levels of every CI. To overcome this information asymmetry, the CC must properly specify a \emph{contract} defined as a pair $(T,R(T))$ where $T$ is the amount of resources allocated to the CI, which can be viewed as the reward/payment made by the firm to the employee and $R$ is the reward that the CC reaps when protecting this CI. We will assume in this model that the reward is an increasing linear function in resources $T$ that takes the form $R_i(T)=r_iT$ where $r_i$ is determined by the vulnerability type $w_i$ such that $r_i$ is higher with higher $w$'s. This implies that, for a higher vulnerability level, the CI is required to pay a higher reward than a less vulnerable one, if they take the same amount of resources. Actually, this reward function design is very important in order for the contract to be binding. By using this design, the CI that claims that has a higher vulnerability level to get more resources than it needs, will be required to pay a higher reward for the needed resources. The signing of a contract between the CC and a certain CI is thus an agreement by the CC to send certain resources to protect the CI which in return will pay a reward $R$ to the CC.

In this system, instead of offering the same contract to all of the CIs and wasting resources, the CC will attempt to offer different contract bundles that are designed in accordance with different types of $w$ and $\theta$ for the available CIs. For the CC, when it decides to protect a certain CI of type $i$, its utility function can simply be defined as the difference between reward and resources allocated multiplied by the CI type, i.e.,
\begin{equation}
U_\textrm{CC,i}(T_i) = \theta_i w_i (R_i(T_i) -  T_i).
\end{equation}
Since there are $M$ types of CIs according to type $w$ with probability $p_{i,w_j}$ and $K$ types according to $\theta$ with probability $q_{i,\theta_j}$, the total utility of the CC can be given by:
\begin{equation}
U_\textrm{CC}(T)=\sum\limits_{i \in \mathcal{N}}\Big(\sum\limits_{k \in \mathcal{K}}q_{i,\theta_k} \cdot \theta_k \Big)\Big (\sum\limits_{j \in \mathcal{M}}p_{i,w_j} \cdot w_j \cdot (R_j(T_i) - T_i)\Big).
\end{equation}

From the CI side, the utility function of a certain CI $i \in \mathcal{N}$:
\begin{equation}
U_{i}(T_i)=\theta_i w_i V(T_i) - \beta R_i(T_i),
\end{equation}
where $\beta$ is a positive unit cost parameter that is less than 1 and $V(T_i)$ is the evaluation function regarding the rewards (how much does this CI value the resources allocated) which is a strictly increasing function of $T$ that takes the form $V(T_i)=vT_i$ where $v$ is the numerical value for the evaluation function. Here, we assume that, to reward the CC, the CI has to pay some cost, such as a negotiation or implementation cost. The contract offered by the CC needs to be feasible for the CI, i.e., it needs to be persuading for the CI to accept. To this end, next, we discuss the conditions for the feasibility of a contract in the studied model.

\subsection{Feasibility of a Contract}
To ensure that both CC and CI owners have an incentive to work together for CIP, the contracts represented by the pairs $(T_i,R_i(T_i))$ must satisfy two key properties:
\begin{enumerate}
\item \emph{Individual Rationality (IR):} The contract that an infrastructure selects should guarantee that the utility of this infrastructure is nonnegative, i.e.,
\begin{equation}
U_i = \theta_i w_i V(T_i) - \beta R_i(T_i) \ge 0,\ i \in \mathcal{N}.
\end{equation}
\item \emph{Incentive Compatibility (IC):} Each infrastructure must always prefer the contract designed for its type, over all other contracts, i.e., $\forall i,\ j \in \mathcal{N},\ i\neq j$:
\begin{equation}
 \theta_i w_i V(T_i) - \beta R_i(T_i) \ge  \theta_i w_i V(T_j) - \beta R_j(T_j).
\end{equation}

\end{enumerate}

The IR constraint ensures that, when a CI signs a certain contract, the received reward must compensate the effort that the CI owner has exerted for the CC. The IC constraint allows to overcome the information asymmetry as it allows to satisfy the revelation principle~\cite{CT00}:  A certain CI of type $i$ will always prefer the contract $(T_i, R_i(T_i))$ that the CC designed for its type over all other possible contracts. In other words, a CI $i$ receives the maximum utility when selecting the contract designed for its own type and, thus, this CI will have an incentive to reveal its \emph{true vulnerability and criticality levels}. A contract is therefore said to be \emph{feasible} if both IR and IC are satisfied. We can state the following lemma from the previous conditions:
\begin{lemma}\label{lem1}
For any feasible contract $(T,R)$; $T_i > T_j$ if and only if $w_i > w_j$.
\end{lemma}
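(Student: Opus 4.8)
The plan is to obtain the claim directly from the incentive compatibility (IC) conditions, using only that $V$ and $R$ are linear — $V(T)=vT$ with $v>0$ and $R_i(T)=r_iT$ — together with the stated monotonicity of the reward coefficient in the vulnerability type, which makes $w_i>w_j$ equivalent to $r_i>r_j$. I would first prove the single implication $w_i>w_j\Rightarrow T_i>T_j$ and then recover the "only if" direction by relabeling indices and trichotomy, so that the whole lemma rests on that one implication. Individual rationality is not needed.

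For the forward implication, fix $i\neq j$ with $w_i>w_j$ and write the IC constraint of type $i$ tested against contract $j$, i.e. $\theta_iw_ivT_i-\beta r_iT_i\ge\theta_iw_ivT_j-\beta r_jT_j$, and the IC constraint of type $j$ tested against contract $i$, i.e. $\theta_jw_jvT_j-\beta r_jT_j\ge\theta_jw_jvT_i-\beta r_iT_i$. Adding these two inequalities, the terms $\beta r_iT_i$ and $\beta r_jT_j$ cancel on both sides, leaving $v(\theta_iw_i-\theta_jw_j)(T_i-T_j)\ge0$, hence $(\theta_iw_i-\theta_jw_j)(T_i-T_j)\ge0$ since $v>0$. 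The next step is to convert the sign of $\theta_iw_i-\theta_jw_j$ into the sign of $w_i-w_j$: since the vulnerability levels are strictly ordered, $w_i>w_j$ forces $w_i$ to lie at a level no lower than the one immediately above $w_j$, so the coupling property $\theta_Kw_j\le\theta_1w_{j+1}$ gives $\theta_jw_j\le\theta_Kw_j\le\theta_1w_{j+1}\le\theta_1w_i\le\theta_iw_i$, i.e. $\theta_iw_i\ge\theta_jw_j$; plugging this back yields $T_i\ge T_j$.

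It then remains to rule out $T_i=T_j$. If $T_i=T_j=T$ for a contract with $T>0$, the IC constraint of type $i$ against contract $j$ collapses to $-\beta r_iT\ge-\beta r_jT$, i.e. $r_i\le r_j$, contradicting $w_i>w_j\Rightarrow r_i>r_j$; hence $T_i>T_j$. For the converse, suppose $T_i>T_j$: applying the forward implication to the pair $(j,i)$ shows $w_i<w_j$ is impossible, and $w_i=w_j$ would mean $i$ and $j$ belong to the same vulnerability type and thus receive the same resource level, also impossible; so $w_i>w_j$.

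The step I expect to need the most care is securing the \emph{strict} inequality $\theta_iw_i>\theta_jw_j$ — equivalently, ruling out the boundary configuration in which the coupling property holds with equality, $j$ carries the top criticality level $\theta_K$, and $i$ carries the bottom criticality level $\theta_1$ of the vulnerability level just above $w_j$; in that configuration the summed IC inequality degenerates to $0\ge0$ and gives no information. I would dispose of this residual case by arguing from the two individual IC inequalities rather than their sum and again using $r_i>r_j$ to break the tie; alternatively, if the property coupling $\theta$ and $w$ is meant to hold strictly, the case does not occur and the argument above is complete. The remaining manipulations are routine once the linear forms of $V$ and $R$ have been substituted.
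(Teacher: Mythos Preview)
Your approach is essentially the paper's: add the two IC constraints so the reward terms cancel, obtain $(\theta_i w_i-\theta_j w_j)(V(T_i)-V(T_j))\ge 0$, and then use the $\theta$--$w$ coupling to convert $w_i>w_j$ into $\theta_i w_i\ge\theta_j w_j$. You are in fact more careful than the paper, which silently uses strict IC inequalities and asserts $\theta_i w_i>\theta_j w_j$ without addressing the boundary case, and which omits the converse direction entirely; your use of $r_i>r_j$ to break ties and your trichotomy argument for the reverse implication fill genuine gaps in the original.
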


\begin{proof} We prove this lemma by using the IC constraint. we have:

\begin{align}
\theta_i w_i V(T_i)-\beta R_i(T_i) > \theta_i w_i V(T_j)-\beta R_j(T_j), \nonumber\\
\theta_j w_j V(T_j)-\beta R_j(T_j)>\theta_j w_j V(T_i)-\beta R_i(T_i).\nonumber
\end{align}

By adding the two inequalities, we get:
\begin{align}
\theta_i w_i V(T_i)+ \theta_j w_j V(T_j) > \theta_i w_i V(T_j)+\theta_j w_j V(T_i), \nonumber\\
\theta_iV(T_i)- \theta_j w_j V(T_i) > \theta_i w_i V(T_j)-\theta_j w_j V(T_j),\nonumber \\
V(T_i)(\theta_i w_i - \theta_j w_j) > V(T_j) (\theta_i w_i-\theta_j w_j). \nonumber 
\end{align}
\end{proof}
\vspace{-0.2cm}Since $w_i>w_j$ and this implies that $\theta_i w_i>\theta_j w_j$, we obtain $V(T_i) >V(T_j)$. By definition, we know that $V(T)$ is an increasing function of $T$, and therefore, since $V(T_i) >V(T_j)$ we have $T_i > T_j$.

Lemma~\ref{lem1} simply proves that the CC must provide more resources to the CI with higher number of vulnerability points, i.e., the one that belongs to a higher $w$ type. This essentially corroborates mathematically our intuition that more resources must be dedicated to more vulnerable CI. Using this lemma, we can state the following \emph{monotonicity} property:
\vspace{-0.1cm}
\begin{equation}
	T_i \leq T_j \ \textnormal{if} \  w_i < w_j  , \forall i,\ j \in \mathcal{N}.
\end{equation}

Another lemma that can be derived from the IR and IC constraints pertains to the utility of the CI:
\begin{lemma}\label{lem2}
For any feasible contract $(T,R(T))$, the utility of each infrastructure must satisfy:
\begin{equation}
	U_i(T_i) \geq U_j(T_j) \ \textnormal{if}\ w_i > w_j \ , \forall i, j \in \mathcal{N}.
\end{equation}
\end{lemma}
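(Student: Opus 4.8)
The plan is to obtain the inequality directly from the incentive-compatibility (IC) constraint, in the same spirit as the proof of Lemma~\ref{lem1}, except that here a \emph{single} IC inequality is enough — no need to add two of them. The starting point is the IC condition that prevents a type-$i$ infrastructure from selecting the contract designed for type $j$:
\begin{equation}
\theta_i w_i V(T_i) - \beta R_i(T_i) \ge \theta_i w_i V(T_j) - \beta R_j(T_j). \nonumber
\end{equation}
By definition the left-hand side is exactly $U_i(T_i)$, so it suffices to show the right-hand side is at least $U_j(T_j) = \theta_j w_j V(T_j) - \beta R_j(T_j)$.

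Next I would rewrite the right-hand side by adding and subtracting $\theta_j w_j V(T_j)$, which gives
\begin{equation}
\theta_i w_i V(T_j) - \beta R_j(T_j) = U_j(T_j) + (\theta_i w_i - \theta_j w_j)\, V(T_j). \nonumber
\end{equation}
Substituting back, $U_i(T_i) \ge U_j(T_j) + (\theta_i w_i - \theta_j w_j)\, V(T_j)$, so the whole statement reduces to checking that the extra term $(\theta_i w_i - \theta_j w_j)\, V(T_j)$ is nonnegative.

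The one step that uses the structure of the model rather than pure algebra is exactly this sign check. Since $w_i > w_j$, the cross-ordering property $\theta_K w_i \le \theta_1 w_{i+1}$ imposed on the vulnerability and criticality values — the same implication already used in the proof of Lemma~\ref{lem1}, namely $w_i > w_j \Rightarrow \theta_i w_i > \theta_j w_j$ — makes the first factor positive, and $V(T_j) = v T_j \ge 0$ because $v > 0$ and $T_j \ge 0$. Hence the extra term is nonnegative and $U_i(T_i) \ge U_j(T_j)$. I do not expect a real obstacle here; the only points to be careful about are invoking the cross-ordering property correctly to get $\theta_i w_i > \theta_j w_j$ from $w_i > w_j$, and noting that it is the IC constraint alone that does the work (the IR constraint would only be needed if one wanted the strengthened chain $U_i(T_i) \ge U_j(T_j) \ge 0$).
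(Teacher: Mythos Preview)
Your proof is correct and follows essentially the same route as the paper: start from the single IC inequality $\theta_i w_i V(T_i)-\beta R_i(T_i)\ge \theta_i w_i V(T_j)-\beta R_j(T_j)$, then use $w_i>w_j\Rightarrow \theta_i w_i>\theta_j w_j$ together with $V(T_j)\ge 0$ to bound the right-hand side below by $\theta_j w_j V(T_j)-\beta R_j(T_j)=U_j(T_j)$. The only cosmetic difference is that the paper writes this as a direct two-step chain of inequalities, while you make the same comparison explicit via the add--subtract decomposition $(\theta_i w_i-\theta_j w_j)V(T_j)\ge 0$.
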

\begin{proof} This result can be shown as follows. We know that an infrastructure which asks for more resources should provide larger rewards to the CC, i.e., if $w_i > w_j$ then $T_i > T_j$ and also $R_i > R_j$. Then, if $w_i > w_j$, we have:
\begin{align*}
U_i(T_i)=\theta_i w_i V(T_i)-\beta R_i(T_i) > \theta_i w_i V(T_j) - \beta R_j(T_j)  \\ 
> \theta_j w_j V(T_j)-R_j(T_j) = U_j(T_j). 
\end{align*}
\end{proof}

\vspace{-0.2cm}Thus, a CI with a higher vulnerability level will receive more utility than one with a lower vulnerability level. From the IC constraint and the two shown lemmas, we can easily deduce the following. If a higher type CI selects a contract designed for a lower type, the less received resources will jeopardize this CI's utility.
Moreover, if a lower type CI selects a contract intended for a higher type, the gain in terms of resources acquired cannot compensate the cost that this CI must reciprocate to the CC. A CI can thus receive its maximum utility if and only if it selects the contract that can best fit its type.

Finally two more constraints must be imposed. First, the CC should take into account that the summation of all allocated resources should be equal to the maximum resources available at the control center:
	$\sum\limits_{i \in \mathcal{N}} T_i = T_{\textrm{max}}$.

Second, that every CI should get sufficient amount of resources to overcome its vulnerable points. That means every type $w_i$ should be associated with a minimum amount of resources. Therefore, each CI will have a minimum required resources according to its $w$ type. This can be expressed as:
	$T_i \geq  T_{i,min}$.

\section{Optimal Contracts}\label{sec:theory}
In this section, we first investigate how the CC can actually find its optimal contracts. In essence, given the hidden information, the only information available at the CC is $p_{i,w_j}$ and $q_{i,\theta_j}$. The goal of the CC is to design contracts that allow it to maximize the use of its resources and, thus, to maximize its utility by solving the following optimization problem:
\begin{equation}\label{eq:opt}
\max_T \sum\limits_{i \in \mathcal{N}}\Big(\sum\limits_{k \in \mathcal{K}}q_{i,\theta_k} \cdot \theta_k \Big)\Big (\sum\limits_{j \in \mathcal{M}}p_{i,w_j} \cdot w_j \cdot (R_j(T_i) - T_i)\Big),
\end{equation}
\vspace{-0.5cm}
\begin{align*}
&\textrm{ s.t. } \hspace{0.5cm} 
\theta_i w_i V(T_i) - \beta R_i(T_i) \ge 0,\ i \in N, \\  \nonumber
&\hspace{1.1cm} \theta_i w_i V(T_i) - \beta R_i(T_i) \ge  \theta_i w_i V(T_j) - \beta R_j(T_j) ,i\neq j, \\ \nonumber
&\hspace{1.1cm} T_i \geq  T_{i,min},\\ \nonumber
&\hspace{1.1cm} \sum\limits_{i \in \mathcal{N}} T_i = T_{\textrm{max}}. \nonumber 
\end{align*}

\vspace{0.15cm}The problem contains a large number of constraints. For instance, the IC constraints correspond to $N(N-1)$ equations. To overcome this, next, we develop a way to relax the problem and reduce the number of constraints to get a more simple problem that could be solved. The problem can be relaxed  using a technique inspired from the work in ~\cite{CIP08}.

\subsection{Relaxed Problem}
The incentive compatibility must to be relaxed because for every one of the CIs we need to define $N-1$ conditions. Therefore, we will now study the local IC constraints, which are, the downward local IC (DLIC) which corresponds to the relation between CIs $i$ and $i-1$. The other local IC is the upward local IC (ULIC) which corresponds to the relation between CIs $i$ and $i+1$. We can now prove the following:
\begin{theorem}\label{theo1}
With the IR satisfied, the local incentive constraints
\begin{equation}\label{eq:DLIC}
\theta_i w_i V(T_i) - \beta R_i(T_i) \ge  \theta_i w_i V(T_{i-1}) - \beta R_{i-1}(T_{i-1}), \\ 
\end{equation}
\begin{equation}\label{eq:ULIC}
\theta_i w_i V(T_i) - \beta R_i(T_i) \ge  \theta_i w_i V(T_{i+1}) - \beta R_{i+1}(T_{i+1}).
\end{equation}
for all $i \in \mathcal{N}$ are sufficient for global incentive compatibility.
\end{theorem}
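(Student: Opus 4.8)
The plan is to derive global incentive compatibility from the two local constraints by a telescoping induction, i.e.\ the standard ``local-to-global'' device for monotone contract problems. Fix an arbitrary type $i$ and an arbitrary target contract $(T_j, R_j(T_j))$ with $j \neq i$; the goal is to establish $\theta_i w_i V(T_i) - \beta R_i(T_i) \ge \theta_i w_i V(T_j) - \beta R_j(T_j)$. I would first relabel the CIs so that the \emph{effective type} $\theta_i w_i$ is nondecreasing in the index $i$; this is legitimate because the orderings \eqref{eq:1}, \eqref{eq:2} together with the assumption $\theta_K w_i \le \theta_1 w_{i+1}$ make the products $\theta_i w_i$ totally ordered, and Lemma~\ref{lem1} (with the monotonicity property it implies) then gives $T_i \ge T_j$, hence $V(T_i) \ge V(T_j)$, whenever $i \ge j$. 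The argument then splits into the ``downward'' case $j < i$, handled with \eqref{eq:DLIC}, and the ``upward'' case $j > i$, handled with \eqref{eq:ULIC}.

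For $j < i$ I would induct on $i-j$. The base case $j = i-1$ is precisely \eqref{eq:DLIC}. For the inductive step, rewrite the DLIC constraint at type $j$ as $\beta\big(R_j(T_j) - R_{j-1}(T_{j-1})\big) \le \theta_j w_j\big(V(T_j) - V(T_{j-1})\big)$; since $V(T_j) - V(T_{j-1}) \ge 0$ by monotonicity and $\theta_i w_i \ge \theta_j w_j$, replacing $\theta_j w_j$ by the larger $\theta_i w_i$ preserves the inequality and yields $\theta_i w_i V(T_j) - \beta R_j(T_j) \ge \theta_i w_i V(T_{j-1}) - \beta R_{j-1}(T_{j-1})$. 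Concatenating this with the induction hypothesis $\theta_i w_i V(T_i) - \beta R_i(T_i) \ge \theta_i w_i V(T_j) - \beta R_j(T_j)$ closes the induction. The upward case is symmetric: starting from \eqref{eq:ULIC} at type $j$ rewritten as $\beta\big(R_{j+1}(T_{j+1}) - R_j(T_j)\big) \ge \theta_j w_j\big(V(T_{j+1}) - V(T_j)\big)$ and using $\theta_j w_j \ge \theta_i w_i$ with $V$ still increasing, the same substitution gives $\theta_i w_i V(T_j) - \beta R_j(T_j) \ge \theta_i w_i V(T_{j+1}) - \beta R_{j+1}(T_{j+1})$, which telescopes upward from the base case $j = i+1$. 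I would also remark that IR enters only to guarantee the contracts are acceptable at all; it plays no role in the chaining itself.

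The algebra is routine; the step that needs care — and the one I would flag as the crux — is ensuring that the effective type $\theta_i w_i$ is genuinely monotone along the chain, so that each substitution of $\theta_j w_j$ by $\theta_i w_i$ moves the inequality in the correct direction. This is exactly where the assumption $\theta_K w_i \le \theta_1 w_{i+1}$ does its work: it prevents the composite types from interleaving across vulnerability levels, so the CIs can be linearly ordered \emph{simultaneously} by $w$ (which controls $T$, via Lemma~\ref{lem1}) and by $\theta w$ (which controls the marginal valuation of resources). If that simultaneous ordering failed, a local-to-global reduction of this kind would not be available, so I would make this consistency explicit at the outset of the proof.
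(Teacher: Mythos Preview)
Your proof is essentially the paper's: both telescope the local constraints, using $\theta_{j}w_{j}\le\theta_i w_i$ to lift the adjacent DLIC at level $j$ to an inequality in the $\theta_i w_i$-valuation, then chain inductively; the paper phrases this as ``add DLIC$(i)$ and DLIC$(i-1)$, then replace $\theta_{i-1}w_{i-1}$ by $\theta_i w_i$,'' which is the same manipulation rearranged. Your explicit discussion of why the composite type $\theta w$ and the allocation $T$ must be \emph{simultaneously} ordered --- and of the role the assumption $\theta_K w_i\le\theta_1 w_{i+1}$ plays there --- is a clarifying addition the paper leaves implicit.

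One point to tighten: you invoke Lemma~\ref{lem1} to obtain $T_i\ge T_j$, but that lemma was proved from the \emph{global} IC constraints, which is exactly what Theorem~\ref{theo1} is establishing, so the appeal is circular. Derive the needed monotonicity from the local hypotheses instead: adding DLIC$(k)$ and ULIC$(k-1)$ yields $(\theta_k w_k-\theta_{k-1}w_{k-1})\big(V(T_k)-V(T_{k-1})\big)\ge 0$, hence $T_k\ge T_{k-1}$, and this chains along adjacent indices. The paper's own replacement step also tacitly requires $V(T_{i-1})\ge V(T_{i-2})$ without saying so, so your instinct to make monotonicity explicit is correct --- just source it from the local constraints rather than from Lemma~\ref{lem1}.
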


\begin{proof} Note (\ref{eq:DLIC}) is called $DLIC(i)$ and (\ref{eq:ULIC}) is called $ULIC(i)$.
We begin by expressing $DLIC(i)$ and $DLIC(i-1)$ as follows:
\begin{align*}
&\theta_i w_i V(T_i) - \theta_i w_i V(T_{i-1}) \geq \beta (R_i(T_i) - R_{i-1}(T_{i-1})),  \\
& \theta_{i-1} w_{i-1} V(T_{i-1})-\theta_{i-1} w_{i-1} V(T_{i-2}) \geq \\
& \hspace{3.6cm} \beta (R_{i-1}(T_{i-1})-R_{i-2}(T_{i-2})).
\end{align*}
Then by adding $DLIC(i)$ and $DLIC(i-1)$ we get:
\begin{align*}
\theta_i w_i V(T_i)-\theta_i w_i V(T_{i-1})+\theta_{i-1} w_{i-1} V(T_{i-1})  \\ 
-\theta_{i-1} w_{i-1} V(T_{i-2}) \geq \beta (R_i(T_i)-R_{i-2}(T_{i-2})).
\end{align*}
and as $\theta_{i-1} w_{i-1} \leq \theta_i w_i$ we can replace it in the previous equation to yield:
\begin{equation}\label{eq:The.proof}
\theta_i w_i V(T_i)-\theta_i w_i V(T_{i-2}) \geq \beta (R_i(T_i)-R_{i-2}(T_{i-1})).
\end{equation}
However, (\ref{eq:The.proof}) is the IC constraint for CIs $i$ and $i-2$ which can be written as $IC(i,i-2)$. This means that $DLIC(i)$ and $DLIC(i-1)$ imply $IC(i,i-2)$. With the same principle we can show that $IC(i,i-1)$ and $DLIC(i-2)$
imply $IC(i,i-3)$, etc. Therefore, starting at $i = N$ and proceeding inductively downward until $i=2$, $DLIC(i)$ implies that $IC(i,j)$ holds for all $i \geq j$. A similar argument in the reverse direction establishes that $ULIC(i)$
implies $IC(i,j)$ for $i \leq j$.
\end{proof}

We can also reduce the IR constraints. There are a total of $N$ IR constraints must be satisfied.
Assume, without loss of generality, that the CI $1$ is from type $w_1$. By using the IC constraints and the IR constraint of the first CI, referred to by $IR(1)$, we have:
\begin{align}
\theta_i w_i V(T_i)-\beta R_i(T_i) &\geq \theta_i w_i V(T_1)-\beta R_1(T_1) \nonumber\\ 
&\geq \theta_1 w_1 V(T_1) -R_1(T_1) \geq 0. \label{eq:IR1}
\end{align}

Thus, if the first IR constraint of $w$ type-1 user is satisfied, all the other IR constraints will automatically hold. Therefore, we only need to keep the first IR constraints and reduce the others.

After reducing the constraints, we have a new problem which is the same as the problem in equation (\ref{eq:opt}) but with the new relaxed constraints in equations (\ref{eq:DLIC}) and (\ref{eq:IR1}) instead of the complete IR and IC constraints.
 
 Note that design parameters such as the reward function $R$, $\theta$, $w$, and $\beta$ should be adjusted by the CC to ensure that $IR(1)$ is satisfied.

\subsection{Solution of the Relaxed Problem}
To solve the relaxed problem, we first observe that there are now only $2N$ inequality constraints and one equality constraint. We can use Lagrangian analysis along with KKT conditions to solve the problem. The Lagrangian of the problem is:
\vspace{0.2cm}
\begin{align}
L(T,\lambda,\mu)&=\sum\limits_{i \in \mathcal{N}}\Big(\sum\limits_{k \in \mathcal{K}}q_{i,\theta_k} \theta_k \Big)\Big (\sum\limits_{j \in \mathcal{M}}p_{i,w_j} w_j (R_j(T_i) - T_i)\Big) \nonumber \\ 
&+\sum\limits_{i=2}^N \mu_i \Big(\theta_i w_i V(T_i)-\theta_i w_i V(T_{i-1})-\beta R_i(T_i) \nonumber \\
& +\beta R_{i-1}(T_{i-1}) \Big) +\mu_1(\theta_1 w_1 V(T_1)-\beta R_1(T_1)) \nonumber \\
& + \sum\limits_{i=1}^N \mu_{N+i} (T_i - T_{i,min}) +\lambda (T_{\textrm{max}} - \sum\limits_{i=1}^N T_i).
\end{align}

\vspace{0.3cm}We need to solve this Lagrangian with KKT conditions to find all $T$ values along with $\mu$ values and $\lambda$. The solution of this problem is not straightforward as the complexity increases with the number of CIs. Therefore, we will show the solution for only two CIs, to show that the problem has a feasible solution. For two CIs, the Lagrangian will be:
\begin{align*}
L(T,\lambda,\mu)&=p_{1,w1} w1 (r_1 T_1 - T_1) + p_{1,w2} w2 (r_2 T_1 - T_1) \\ 
&+ p_{2,w1} w1 (r_1 T_2 - T_2) + p_{2,w2} w2 (r_2 T_2 - T_2) \\
&+\mu_2(\theta_2 w_2 v T_2-\theta_2 w_2 vT_1-\beta r_2T_2 + \beta r_1T_1) \\
&+\mu_1(\theta_1 w_1 v T_1-\beta r_1T_1)+\mu_3 (T_1 - T_{1,min}) \\
&+ \mu_4 (T_2 - T_{2,min})+\lambda (T_{\textrm{max}} - T_1-T_2).
\end{align*}

The KKT conditions for this Lagrangian are the relaxed problem constraints along with:
\begin{align*}
& p_{1,w1} w1 (r_1 - 1) + p_{1,w2} w2 (r_2 - 1)+ \mu_1(\theta_1 w_1 v-\beta r_1) \\ 
& +\mu_2(\beta r_1-\theta_2 w_2 v)+\mu_3-\lambda=0.  \\
& p_{2,w1} w1 (r_1 - 1) + p_{2,w2} w2 (r_2 - 1)+\mu_2(\theta_2 w_2 v -\beta r_2) \\
& +\mu_4 - \lambda = 0.\\
& \mu_1(\theta_1 w_1 vT_1 -\beta r_1T_1)= 0.\\
& \mu_2 (\theta_2 w_2 (vT_2 - vT_1) -\beta( r_2 T_2 + r_1 T_1)=0. \\
& \mu_3 (T_1 - T_{1,min}=0. \\
& \mu_4 (T_2 - T_{2,min}=0. \\
& \mu_1,\mu_2,\mu_3,\mu_4 \geq 0. 
\end{align*}

This problem gives only one optimal solution which is $T_1 = T_{1,min}$ and $T_2=T_{\textrm{max}} - T_1$. Actually this solution is only feasible if the following condition is satisfied $T_{1,min}+T_{2,min} \leq T_{\textrm{max}}$. This implies that low vulnerability type CI will take its minimum required resources and the rest goes to the higher type CI. This result is not surprising as it is aligned with contract-theoretic results that study the contractual situation between a firm and two agencies (of two different types). ~\cite{CT00}. For the case of more than two CIs, the lower type CI will get its lower limit and the rest of resources will be allocated to higher types according to their probabilities in a way to maximize the CC's utility. 

\subsection{Practical Implementation}
Beside designing contracts, the CC needs to communicate with CIs, determine which CIs to protect, and sign contracts with them. We give the actual steps taken by the CC in this regard in Algorithm~\ref{algorithm:1}. The CC begins by having the initial information such as the set of vulnerability levels $\mathcal{M}$, the probability $p_{i,w_j}$ of that a CI $i$ will belong to each of the $M$ levels, the set of criticality levels $\mathcal{K}$ and the probability $q_{i,\theta_j}$ of that a CI $i$ will belong to each of the $K$ levels. The CC also knows the minimum amount of resources required to protect a CI in each of the vulnerability levels as well as the total amount of available resources. The CC, hence, declares that it will offer resources to protect some CIs and begins to receive requests from CIs that are willing to be protected and designs optimal contracts for them.
\begin{algorithm}[t]\label{algorithm:1}
\DontPrintSemicolon
  \caption{Optimized Contract implementation of CC for Resource Allocation}
  \KwIn{$\mathcal{M}, \mathcal{K}, p_{i,w_j},q_{i,\theta_j},T_{\textrm{max}}, T_{i,min} $ }
   \KwOut{$(T,R(T))$}
   1. CC declares its willingness to protect some infrastructures\;
   2. Receive request from infrastructures willing to be protected\;
   3.\textbf{Solve the optimal contract} for current infrastructures\;
   \uIf{The program has a solution, i.e, the avaialble resources are sufficient for all users}{ Contracts are ready, proceed to step 4\;
    }
    \Else
    {Remove the least critical infrastructure (begin with higher probability)\;
    return to step 3\;
    }
   4. \textbf{The CC Offers the contracts and waits for feedback}\;
   \uIf{All infrastructures accepted the offered contracts}{
   proceed to step 5\;
   }
  \Else
  {
	return to step 3, for any previously excluded infrastructures\;
  } 
   5. Sign contracts with infrastructures and allocate resources\;
\end{algorithm}

Algorithm~\ref{algorithm:1} shows the importance of the criticality level. When the CC is not able to protect all the CIs, it will discard some CIs depending on their criticality levels as the CC is more interested in protecting higher critical CIs. This is done by removing the least critical infrastructures from the process of designing contracts. However, as the CC only knows the probabilities of criticality levels, it will remove the one that belongs to the lower criticality level with a higher probability. The CC repeats this process until there is enough resources for the rest of CIs. 
When CIs receive contracts, they will evaluate them and inform the CC whether they are willing to accept a contract, i.e., receive resources and return reward. If not all CIs accept a contract, the CC will reconsider any CIs that were excluded due to lack of resources. After this process is finished, the CC will sign contracts with CIs and allocate resources.

\section{Simulation Results and Analysis}\label{sec:sim}

Simulations are used to evaluate the designed mechanism. For our simulations, we choose $3$ vulnerability levels and $4$ criticality levels. The number of available resources is set to $500$. The reward function increases by $3$ for each $w$ type. The evaluation function was assumed to be two times the resources. The lower bounds associated with $w$ types are set to 20, 60, and 100 respectively. First, we check the feasibility of our contract. We assume that all CIs ask for protection and all of them accept contracts offered by the CC. We calculate the CC's utility in case of using the proposed mechanism and in case of allocating resources equally between CIs. 

\begin{figure}[t]
  \centerline{\includegraphics[width=7cm]{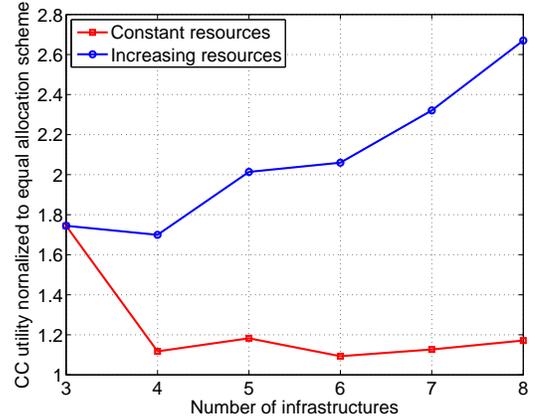}}
  \caption{The CC utility in the case of using the proposed contract and the case of equal resource allocation when fixing $T_{\textrm{max}}$ and when increasing $T_{\textrm{max}}$ by $30\%$ with every added infrastructure.}\label{fig:1}
\end{figure}

In Fig.~\ref{fig:1}, we show the variation in the CC utility as the number of infrastructures increases; the utility is normalized to the case of equal resource allocation. The figure studies two cases: fixing the amount of resources for all CIs and increasing the amount of resources each time a CI is added. When the number of resources is fixed to $500$, in Fig.~\ref{fig:1}, we can see that, with $3$ CIs there is about $75\%$ increase in the CC utility, relative to equal allocation. When more CIs are added, the percentage increase in CC utility is between $10\%$ and $20\%$. This is due to the fact that, when the number of CIs is small, the CC has more resources than needed and, thus, it will give them to higher types and to get higher rewards for the same resources. In the second case, the amount of resources increases by $30\%$ of the original amount each time we add a new CI. The CC utility in this case keeps increasing as the CC allocates the more available resources to higher types to get higher rewards. 

Next, we add a new vulnerability level with associated lower bound of 140 and we increase the number of available resources to 650. We have $4$ CIs, they are assumed to be within different $w$ types in ascending order, i.e. CI $1$ is within $w_1$ and so on. However, the CC still knows their probabilities not their actual types. Fig.~\ref{fig:2} shows the utility of CIs in case of using optimal contracts and in case of equal resource allocation. The figure proves the monotonicity property of the proposed contract as higher types CIs get higher utilities. We can also see in Fig.~\ref{fig:2} that, in case of optimal contracts, higher types CIs get higher utilities and lower types CIs get lower utilities compared to equal resource allocation. However, these lower utilities is not a problem as these CIs get much more resources than needed for protection.
\begin{figure}[t]
  \centerline{\includegraphics[width=7cm]{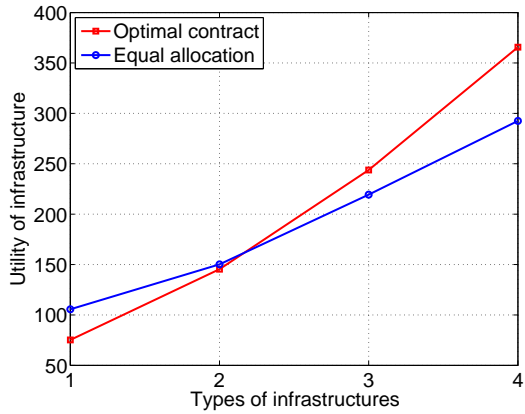}}
  \caption{Infrastructures' utilities in the case of using the proposed contract and the case of equal resource allocation.}\label{fig:2}
\end{figure}

In Fig.~\ref{fig:3}, while maintaining the parameters of Fig.~\ref{fig:2}, we show the utility of the infrastructure as the contract type varies. Here, we measure the utility of each CI if it used the contract designed for its type and contracts designed for other types. In Fig.~\ref{fig:3}, we can clearly see that it is better for every CI to use the contract designed for its type as this maximizes its utility. Actually, CIs can get more resources from choosing higher types contracts but will be required to pay higher rewards which is reflected in decreasing their utility.
\begin{figure}[t]
  \centerline{\includegraphics[width=7cm]{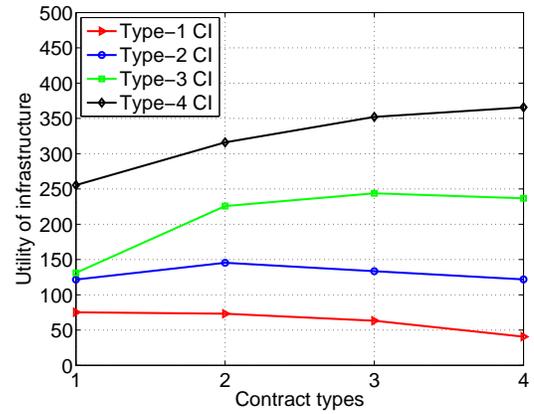}}
  \caption{The utility of each infrastructure while accepting the contract designed for his type or other contracts.}\label{fig:3}
\end{figure}

\section{Conclusions}\label{sec:conclusion}    

In this paper, we have studied the problem of resource allocation
for protecting CIs. We have formulated the problem using a contract-theoretic model in which a CC offers contracts
to a number of CIs and each one selects its best contract. For
each CI, we have defined two different types that correspond
to the vulnerability and the criticality levels. In the model, the
CC does not know these exact levels but only knows with
which probability a CI will belong to a certain level. We have
provided the necessary and sufficient conditions for such resource
allocation contracts under asymmetric information. The problem
was then relaxed and solved to show that it has an optimal
solution and motivates each CI to accept the contract designed
for each type. Simulation results have shown that this model
helps the CC to get higher utility than the case of equal resource
allocation. In addition, our results show that each CI will not
gain from selecting other contracts as its utility will not increase.         

\def\baselinestretch{0.8}
\bibliographystyle{IEEEtran}
\bibliography{references}

\begin{thebibliography}{10}
\providecommand{\url}[1]{#1}
\csname url@samestyle\endcsname
\providecommand{\newblock}{\relax}
\providecommand{\bibinfo}[2]{#2}
\providecommand{\BIBentrySTDinterwordspacing}{\spaceskip=0pt\relax}
\providecommand{\BIBentryALTinterwordstretchfactor}{4}
\providecommand{\BIBentryALTinterwordspacing}{\spaceskip=\fontdimen2\font plus
\BIBentryALTinterwordstretchfactor\fontdimen3\font minus
  \fontdimen4\font\relax}
\providecommand{\BIBforeignlanguage}[2]{{%
\expandafter\ifx\csname l@#1\endcsname\relax
\typeout{** WARNING: IEEEtran.bst: No hyphenation pattern has been}%
\typeout{** loaded for the language `#1'. Using the pattern for}%
\typeout{** the default language instead.}%
\else
\language=\csname l@#1\endcsname
\fi
#2}}
\providecommand{\BIBdecl}{\relax}
\BIBdecl

\bibitem{CIP01}
\BIBentryALTinterwordspacing
D.~of~Homeland~Security. (2014) Critical infrastructure sectors. [Online].
  Available: \url{http://www.dhs.gov/critical-infrastructure-sectors}
\BIBentrySTDinterwordspacing

\bibitem{CIP03}
B.~A. Baalbaki, Y.~Al-Nashif, S.~Hariri, and D.~Kelly, ``Autonomic critical
  infrastructure protection (acip) system,'' in \emph{ACS International
  Conference on Computer Systems and Applications (AICCSA)}, 2013, pp. 1--4.

\bibitem{CIP02}
G.~Sandaruwan, P.~Ranaweera, and V.~Oleshchuk, ``Plc security and critical
  infrastructure protection,'' in \emph{IEEE International Conference on
  Industrial and Information Systems (ICIIS)}, Peradeniya, Sri Lanka, Dec.
  2013, pp. 81--85.

\bibitem{CIP04}
J.~McCausland, G.~D. Nardo, R.~Falcon, R.~Abielmona, V.~Groza, and E.~Petriu,
  ``A proactive risk-aware robotic sensor network for critical infrastructure
  protection,'' in \emph{IEEE International Conference on Computational
  Intelligence and Virtual Environments for Measurement Systems and
  Applications (CIVEMSA)}, 2013, pp. 132--137.

\bibitem{CIP05}
L.~P. Beltran, M.~Merabti, and Q.~Shi, ``Multiplayer game technology to manage
  critical infrastructure protection,'' in \emph{IEEE 11th International
  Conference on Trust, Security and Privacy in Computing and Communications
  (TrustCom)}, 2012, pp. 549--556.

\bibitem{CIP09}
V.~M. Bier, N.~Haphuriwat, J.~Menoyo, R.~Zimmerman, and A.~M. Culpen, ``Optimal
  resource allocation for defense of targets based on differing measures of
  attractiveness,'' \emph{Risk Analysis}, vol.~28, no.~3, pp. 763--770.

\bibitem{CIP10}
Y.~Huang, Y.~Fan, and R.~L. Cheu, ``Optimal allocation of multiple emergency
  service resources for protection of critical transportation infrastructure,''
  \emph{Transportation Research Record: Journal of the Transportation Research
  Board}, vol. 2022, no.~1, pp. 1--8, 2007.

\bibitem{CIP11}
M.~Harris and R.~M. Townsend, ``Resource allocation under asymmetric
  information,'' \emph{Econometrica: Journal of the Econometric Society}, pp.
  33--64, 1981.

\bibitem{CT00}
P.~Bolton and M.~Dewatripont, \emph{Contract Theory}.\hskip 1em plus 0.5em
  minus 0.4em\relax Cambridge, MA, USA: MIT Press, 2004.

\bibitem{CIP06}
Y.~Zhang, L.~Song, W.~Saad, Z.~Dawy, and Z.~Han, ``Contract-based incentive
  mechanisms for device-to-device communications in cellular networks,''
  \emph{IEEE Journal on Selected Areas in Communications (JSAC), Special issue
  on Heterogeneous Networks}, p. to appear, 2015.

\bibitem{CIP07}
L.~Duan, L.~Gao, and J.~Huang, ``Cooperative spectrum sharing: A contract-based
  approach,'' \emph{IEEE Transactions on Mobile Computing}, vol.~13, no.~1, pp.
  174--187, Jan. 2014.

\bibitem{CIP08}
L.~Stole, ``Lectures on the theory of contracts and organizations,''
  \emph{Unpublished monograph}, 2001.

\end{thebibliography}

\end{document}